\newenvironment{proof}{\paragraph{Proof:}}{\hfill$\square$} 
\newtheorem{theorem}{Theorem}[section] 
\numberwithin{theorem}{section} 
\newtheorem{lemma}[theorem]{Lemma}
\newtheorem{proposition}[theorem]{Proposition} 
\newtheorem{remark}[theorem]{Remark} 
\let\fd=\rightarrow
\let\lfd=\longrightarrow
\let\lfc=\longmapsto
\let\eps=\epsilon
\let\cj=\overline
\let\dps=\displaystyle
\let\bb=\bigbreak
\def\P{{\msb P}}
\def\f#1{{\msb F}_{#1}}
\def\tr{\mathop{\rm Tr}\nolimits}
\def\pt#1{\left(#1\right)}
\def\ssi{\hbox{ si et seulement si }}
\def\biq#1{\quad \hbox{ #1 }\quad }
\def\og{\leavevmode\raise.3ex\hbox{$\scriptscriptstyle\langle\!\langle$}}
\def\fg{\leavevmode\raise.3ex\hbox{$\scriptscriptstyle\rangle\!\rangle$}}
\font\tenmsb=msbm10 
\font\sevenmsb=msbm7
\font\fivemsb=msbm5
\def\msb{\fam\msbfam\tenmsb}% 
\begin{document} 
\pagestyle{empty}
\title{Non-linearity of the Carlet-Feng function,\\ and  repartition of Gauss sums} 
\date{}%leave it blank 
\author{Fran\c cois Rodier\thanks{ \small Aix Marseille Universit\'e, CNRS, Centrale Marseille, Institut de Math\'ematiques de Marseille, UMR 7373, 13288 Marseille, France }}

\maketitle
\begin{abstract} %A short abstract 
The search for Boolean functions that can withstand the main crypyographic attacks is essential.
In 2008, Carlet and Feng studied a class of functions which have optimal cryptographic properties with the exception of nonlinearity for which they give a good but not optimal bound. Carlet and some people who have also worked on this problem of nonlinearity have asked for a new answer to this problem.
We provide a new solution to improve the evaluation of the nonlinearity of the Carlet-Feng function, by means of the estimation of the distribution of Gauss sums. This work is in progress and we give some suggestions to improve this work.
\end{abstract}
\thispagestyle{empty}

\parindent=0mm
\newcommand{\F}{\mathbb F}
\newcommand{\E}{\mathcal E}
\newcommand{\K}{\mathbb K}
\newcommand{\ii}{\mathbbm{i}}
\newcommand{\mR}{{\mathcal R}}
\newcommand{\cS}{{\mathcal S}}
\newcommand{\fS}{{\mathfrak S}}
\newcommand{\Tr}{ \ensuremath{ \mathrm{Tr}}}
\renewcommand{\P}{\mathbb P}
\newcommand\numberthis{\addtocounter{equation}{1}\tag{\theequation}}

% \definecolor{gris}{rgb}{.5,.5,.5}

\long\def\zero #1{{}}
\def\pause{}
\let\g=\gamma
\def\cx{{\cal X}}
\def\ssi{\hbox{ if and only if }}
\def\s{{\sigma(\mu)}}

{\bf Keywords:} {Carlet-Feng function, nonlinearity, Gaussian sums, equidistribution, discrepancy}

\section{Introduction}

Boolean functions on the space $\f{2}^m$ are not only important in the
theory of error-correcting codes, but also in cryptography, where they
occur in stream ciphers or private key systems.
In both cases, the properties of systems depend on
the nonlinearity of a Boolean
 function.
The   nonlinearity of a Boolean function  
$f:\f2^m\lfd\f2$ 
is
the distance from $f$ to the set of affine functions  with
$m$ variables. 
 The nonlinearity is linked to the covering radius of Reed-Muller
codes. % 
 It is also an
important cryptographic parameter.
We refer to \cite{ca} for a
global survey on the  Boolean functions.

It is useful to have at one's disposal Boolean functions
 with  highest nonlinearity.  These
functions have been studied in the case where $m$ is even, and have
been called ``bent'' functions.
For these, the
degree of nonlinearity is  well known, we know how to  construct
several series of them.

The problem of the research of the maximum of the degree of
nonlinearity  comes down to  minimize the Fourier transform of
Boolean functions. 

\subsection{The Carlet-Feng function}

Let $n$ be a positive integer and $q=2^n$.
In 2008, Carlet and Feng \cite{cafe} studied a class of Boolean functions $f$ on $\f{2^n}$ which is defined by
their support
  $$\{0, 1, \alpha, \alpha^2,\dots,\alpha^{2^{n-1}-2}\}$$
  where $\alpha$ is a primitive element of the field $\f{2^n}$.
In the same article they show that
these functions when $n$ varies  have
 optimum algebraic immunity,
good nonlinearity and optimum
algebraic degree.
These computations are very good but still not good enough: in fact these bounds are
 not enough for ensuring a sufficient nonlinearity. 
Some works have been done on that by Q. Wang and P. Stanica \cite{wast} and other authors (cf. Li et al \cite{lcz} and Tang et al. \cite{tct}).
They find the bound
$$2^{n-1}-  nl (f)  \le {1\over\pi}q^{1/2} \pt{ n \ln 2 + \gamma  +\ln\pt{8\over\pi} +o (1) }$$
where $\gamma$ is the Euler's constant.
  Nevertheless, there is a gap between the bound that they can prove and the actual
computed values  for a finite numbers of functions which
are very good, of order
$2^{n-1} - 2^{n/2}$. 
Carlet and 
some  authors cited above \cite{lcz,tct,wast} who have also worked on this nonlinearity asked for new answer to this problem. 
In this paper
we bring a new solution to improve the evaluation of the nonlinearity of the Carlet-Feng function, by means of the estimation of the distribution of Gauss sums. 
We will find a slightly better asymptotic bound (see (\ref{resultat})) but  this work is in progress and we give some suggestions to improve this work and hopefully to get a result closer to what expected.
It will be the same for other classes of Boolean functions which are based on Carlet-Feng construction.

\subsection{The nonlinearity}

The nonlinearity of these functions is given by   
 \begin{equation}
\label{nl}
nl(f)=2^{n-1}- \max_{\lambda\in\f{2^n}^*} |S_\lambda |
\biq{where}
S_\lambda = \sum_{i=2^{n-1}-1}^{2^n-2} (-1)^{\tr(\lambda \alpha^i)}.
\end{equation}
We define $\zeta= \exp\pt{2i\pi\over 2^n-1}$, $\chi$ be the multiplicative
character of $\f{2^n}$ such that
$\chi(\alpha)=\zeta$.
For $a\in \f q^*$
let us define the Gaussian sum $G(a,\chi)$  by
$$G(a,\chi)=\sum_{x\in\f q^*} \chi(x) \exp( \pi i \tr(ax))$$
and $G (\chi)= G (1,\chi)$.
By Fourier transformation of (\ref{nl}) we get
  $$S_\lambda={1\over q-1}\pt{ \sum_{\mu=1}^{q-2} G(\chi^{\mu})\zeta^{-\mu\ell}{\zeta^{-\mu({q\over2}-1)}-1\over 1-\zeta^{-\mu}} -{q\over2}}.$$

Carlet and Feng deduced from that the bound
 $$|S_\lambda | \le  {1\over q-1}\pt{ \sum_{\mu=1}^{q-2} \sqrt q  \left|{\zeta^{-\mu({q\over2}-1)}-1\over 1-\zeta^{-\mu}} \right| +{q\over2}}.$$
 The upperbound of $|S_\lambda|$ is attained if
 the arguments of 
$G(\chi^{\mu})\zeta^{-\mu\ell}$ are the opposite of the ones 
of ${\zeta^{-\mu({q\over2}-1)}-1\over 1-\zeta^{-\mu}}$.
I will show that this situation is impossible and that will lead us to a better bound.

\section{Equidistribution of the arguments of Gauss sums}
\subsection{A result of Nicolas Katz}

Nicolas Katz (chapter 9 in \cite{katz})
has proved that
\begin{proposition}
\label{equid}
For $a$ fixed in $\f{2^n}^*$
the arguments of $ G(a,\chi^\mu) $ for $1\le \mu\le q-2$ are equidistributed on the segment $[-\pi,\ \pi]$.
\end{proposition}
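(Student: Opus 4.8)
The plan is to reduce the statement to the classical Weyl equidistribution criterion and then to control the resulting power sums by Deligne's square-root cancellation for hyper-Kloosterman sums. Write $\psi(t)=\exp(\pi i\,\tr(t))$ for the (nontrivial) additive character, so that $G(a,\chi^\mu)=\sum_{x\in\f q^*}\chi^\mu(x)\psi(ax)$. For $1\le\mu\le q-2$ the character $\chi^\mu$ is nontrivial, whence $|G(a,\chi^\mu)|=\sqrt q$, and the normalized sums $g_\mu=G(a,\chi^\mu)/\sqrt q$ lie on the unit circle; their arguments are precisely the angles we wish to equidistribute in $[-\pi,\pi]$. By Weyl's criterion, uniform distribution on the circle is equivalent to showing that, for every fixed nonzero integer $k$, the mean $\frac1{q-2}\sum_{\mu=1}^{q-2} g_\mu^{\,k}$ tends to $0$ as $q\to\infty$.

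First I would compute the $k$-th power sum over the full character group by expanding each factor and interchanging summations. Adjoining the trivial term $\mu=0$ (which contributes $G(a,\chi^0)=-1$, hence only $(-1)^k q^{-k/2}$ after normalization, and is negligible), orthogonality of multiplicative characters collapses the inner sum and yields
\[
\sum_{\mu=0}^{q-2} G(a,\chi^\mu)^k
=\sum_{x_1,\dots,x_k\in\f q^*}\psi\pt{a(x_1+\cdots+x_k)}\sum_{\mu=0}^{q-2}\chi^\mu(x_1\cdots x_k)
=(q-1)\!\!\sum_{\substack{x_1,\dots,x_k\in\f q^*\\ x_1\cdots x_k=1}}\!\!\psi\pt{a(x_1+\cdots+x_k)}.
\]
After the substitution $y_i=a\,x_i$, the right-hand sum is exactly the hyper-Kloosterman sum $\mathrm{Kl}_k(a^k)=\sum_{y_1\cdots y_k=a^k}\psi(y_1+\cdots+y_k)$, so the $k$-th moment of the normalized Gauss sums equals $q^{-k/2}\,\mathrm{Kl}_k(a^k)$ up to the harmless trivial term.

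The decisive input is then Deligne's bound $|\mathrm{Kl}_k(b)|\le k\,q^{(k-1)/2}$, valid uniformly in $b\in\f q^*$ for each fixed $k$. It gives
\[
\Biggl|\frac1{q-1}\sum_{\mu=0}^{q-2} g_\mu^{\,k}\Biggr|
= q^{-k/2}\,\bigl|\mathrm{Kl}_k(a^k)\bigr|\le \frac{k}{\sqrt q},
\]
so the normalized Weyl sum over $1\le\mu\le q-2$ is $O(k/\sqrt q)$ and vanishes in the limit. Negative exponents are handled by conjugation: in characteristic $2$ one has $\chi^\mu(-1)=1$, hence $\overline{G(a,\chi^\mu)}=G(a,\chi^{-\mu})$, and reindexing $\mu\mapsto -\mu$ reduces the $-k$ moment to the $k$ case. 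This is the mechanism underlying Katz's chapter 9 (where it is cast in terms of the monodromy of the Kloosterman sheaf and Deligne's equidistribution theorem); the moment computation above is the streamlined arithmetic shadow of it, and it identifies the limiting law as the uniform (Haar) measure on $[-\pi,\pi]$ since all nonzero Fourier coefficients die out.

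The main obstacle is precisely the estimate on $\mathrm{Kl}_k(a^k)$: the elementary Weil bound suffices only for $k=1,2$, while general $k$ requires the full strength of Deligne's resolution of the Weil conjectures applied to the hyper-Kloosterman family. Once that deep input is granted, the remainder of the argument is the routine bookkeeping indicated above (separating the trivial character, comparing the normalizations $q-1$ and $q-2$, and invoking Weyl's criterion), none of which affects the $O(k/\sqrt q)$ rate.
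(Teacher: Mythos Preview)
Your argument is correct, and it is essentially the same mechanism the paper itself employs. The paper does not give an independent proof of this proposition---it simply attributes it to Katz---but immediately afterwards it performs exactly your moment computation: it expands $\sum_{\mu} G(a,\chi^\mu)^r$, recognizes it as a hyper-Kloosterman sum, and invokes Deligne's bound to obtain $|\sum_{1\le\mu\le q-2} G(a,\chi^\mu)^r|\le 1+r\,q^{(r+1)/2}$. The only difference is in packaging: you feed this moment estimate into Weyl's criterion to conclude equidistribution, whereas the paper feeds it into the Erd\H{o}s--Tur\'an--Koksma inequality to obtain the quantitative discrepancy bound $D_{q-2}=O(q^{-1/4})$ (Proposition~\ref{disc}), from which equidistribution follows via Proposition~\ref{ud}. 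The paper's route is thus a strict refinement of yours, giving a rate rather than a bare limit, but the core arithmetic input (orthogonality collapsing the power sum to a hyper-Kloosterman sum, then Deligne) is identical.
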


For $l$ fixed in $\f{2^n}^*$
the arguments of $G(\chi^\mu)\zeta^{-\mu l}$ for $1\le \mu\le q-2$ are also equidistributed on the segment $[-\pi,\ \pi]$ since
by \cite{nied} theorem 5.12, they satisfy: $G(\chi^\mu)\zeta^{-\mu l}= G( (-1)^{{\tr(\alpha^l )}}, \chi^\mu )$.
\subsection{Discrepancy}

To get a result a little more precise than Katz's we need the notion of discrepancy.
We define the
discrepancy (see \cite{drti} or \cite{kuni})
of a
sequence of $N$ real numbers 
$x_1, \dots,x_N\in [0,\ 1[$ 
 by
$$D_N (x_N)= \max_{0\le x \le 1} |{A(x, N)\over N} -x |$$
where
$A(x, N) $ = number of $m\le N$ such that $x_m\le x$.

\begin{proposition} \label{ud} A sequence $(x_N)_{N\ge1}$ is uniformly distributed mod 1 if and only if 
$$\lim_{N\fd\infty} D_N(x_N) = 0.$$ 
\end{proposition}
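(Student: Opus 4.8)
The plan is to prove the two implications separately, noting that the substance lies entirely in deducing vanishing discrepancy from uniform distribution, since the reverse implication is immediate. Throughout I identify uniform distribution mod $1$ with the pointwise statement that $A(x,N)/N \to x$ as $N \to \infty$ for each \emph{fixed} $x \in [0,1]$: taking the interval $[0,x)$ in the definition gives this for the half-open count, and a squeeze at the right endpoint (comparing the counts for $[0,x)$ and $[0,x')$ with $x' \downarrow x$) promotes it to the closed count $A(x,N)$ used here. In this language the claim reads: the functions $x \mapsto A(x,N)/N$ converge to $x \mapsto x$ \emph{pointwise} if and only if they converge \emph{uniformly} on $[0,1]$, and $D_N(x_N)$ is precisely the sup-norm of the difference.

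For the direction $D_N(x_N) \to 0 \implies$ uniform distribution, I would argue directly from the definition. Uniform convergence in particular forces $A(x,N)/N \to x$ at every $x$, and for any $0 \le a < b \le 1$ the number of indices $m \le N$ with $x_m \in [a,b]$ equals $A(b,N) - A(a,N)$ (up to the endpoint correction handled above); dividing by $N$ and passing to the limit yields $b - a$, which is uniform distribution.

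The heart of the argument is the converse, a pointwise-to-uniform upgrade of Glivenko--Cantelli / P\'olya type. The two decisive structural facts are that each $x \mapsto A(x,N)$ is \emph{nondecreasing} and that the limit $x \mapsto x$ is \emph{continuous} (indeed $1$-Lipschitz) on the compact interval $[0,1]$. Given $\varepsilon > 0$, I would choose an integer $k > 1/\varepsilon$ and partition $[0,1]$ by the grid $y_j = j/k$, $0 \le j \le k$, so that consecutive gridpoints differ by less than $\varepsilon$. Pointwise convergence at the finitely many gridpoints provides a single $N_0$ with $|A(y_j,N)/N - y_j| < \varepsilon$ for all $j$ and all $N \ge N_0$. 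For an arbitrary $x$ I would locate $j$ with $y_{j-1} \le x \le y_j$ and sandwich, using monotonicity of the count, $A(y_{j-1},N) \le A(x,N) \le A(y_j,N)$; combining this with the estimates at $y_{j-1}$ and $y_j$ and the bound $y_j - y_{j-1} < \varepsilon$ gives $|A(x,N)/N - x| < 2\varepsilon$ simultaneously for every $x \in [0,1]$. Hence $D_N(x_N) \le 2\varepsilon$ for $N \ge N_0$, and since $\varepsilon$ is arbitrary, $D_N(x_N) \to 0$.

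I expect the only genuine obstacle to be the uniformity in this last step: uniform distribution supplies control at each point separately, whereas the discrepancy demands a single rate valid across all $x$ at once. The monotonicity sandwich is exactly what converts finitely many pointwise estimates into a uniform one, and the continuity of the limiting distribution is what keeps the error introduced by the gridding ($y_j - y_{j-1}$) small; both ingredients are essential, and the same scheme would degenerate if the limiting distribution function had jumps.
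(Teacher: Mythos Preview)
Your argument is correct and is essentially the classical P\'olya/Glivenko--Cantelli sandwich: monotonicity of the empirical count together with the continuity of the limiting distribution upgrades pointwise convergence at a finite grid to uniform convergence on $[0,1]$. The endpoint remark about open versus closed counts is handled adequately.

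However, the paper does not actually supply a proof of this proposition; it is stated without proof as a standard fact from the theory of uniform distribution (the references \cite{drti} and \cite{kuni} cover it --- this is Theorem~1.1 in Chapter~2 of Kuipers--Niederreiter). So there is nothing in the paper to compare your approach against. Your write-up is a faithful reconstruction of the textbook proof and would serve perfectly well if the paper had chosen to include one.
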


We have an estimate of the discrepancy thanks to Erd\"os-Turan-Koksma's inequality.

\begin{lemma} [Erd\"os-Turan-Koksma's inequality]
There is an absolute constant $C$ such that for every $H\ge1$, 
$$D_N(x_N)<C\pt{{1\over H} +\sum_{h=1}^H {1\over h}\left| {1\over N}\sum_{m=1}^N \exp(2\pi i h x_m) \right| }$$
\end{lemma}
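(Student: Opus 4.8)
The plan is to prove this by a Fourier-analytic sandwiching argument: I approximate the indicator function of the interval $[0,x)$ from above and below by trigonometric polynomials of degree at most $H$, and then test the empirical distribution $\frac1N\sum_{m=1}^N \delta_{x_m}$ against these polynomials. First I would rewrite the quantity inside the discrepancy as $\frac{A(x,N)}{N}-x = \frac1N\sum_{m=1}^N \chi_{[0,x)}(x_m) - x$, where $\chi_{[0,x)}$ is viewed as a $1$-periodic function, so that bounding $D_N$ amounts to controlling this expression uniformly in $x$. The obstruction to substituting the Fourier series of $\chi_{[0,x)}$ directly is that this function is discontinuous: its Fourier coefficients decay only like $1/h$ and the series converges merely conditionally, so one cannot simply truncate and test against the sequence.

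The key step is therefore to construct, for each fixed $x$, periodic trigonometric polynomials $p^+_H$ and $p^-_H$ of degree at most $H$ satisfying $p^-_H(t)\le \chi_{[0,x)}(t)\le p^+_H(t)$ for all $t$, together with the mass bound $\int_0^1\bigl(p^+_H-p^-_H\bigr)\,dt \le \frac{C_1}{H+1}$ and the Fourier-coefficient bounds $|\widehat{p^\pm_H}(h)|\le \frac{C_2}{h}$ for $1\le h\le H$. These are exactly the Beurling–Selberg (Vaaler) extremal majorants and minorants of an interval; their existence, explicit form, and the verification of the coefficient estimates constitute the technical heart of the argument, and this is the step I expect to be the main obstacle.

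Granting such polynomials, the finish is routine. Averaging the majorant over the points $x_m$ and expanding in its Fourier series gives
$$\frac1N\sum_{m=1}^N \chi_{[0,x)}(x_m) \;\le\; \frac1N\sum_{m=1}^N p^+_H(x_m) \;=\; \widehat{p^+_H}(0) \;+\; \sum_{1\le|h|\le H}\widehat{p^+_H}(h)\,\frac1N\sum_{m=1}^N e^{2\pi i h x_m}.$$
The zeroth coefficient satisfies $\widehat{p^+_H}(0)=\int_0^1 p^+_H\le x+\frac{C_1}{H+1}$, while the remaining terms, using $|\widehat{p^+_H}(h)|\le C_2/h$ and pairing $\pm h$, are bounded by $2C_2\sum_{h=1}^H\frac1h\bigl|\frac1N\sum_{m=1}^N e^{2\pi i h x_m}\bigr|$. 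This produces the one-sided upper estimate for $\frac{A(x,N)}{N}-x$; the matching lower estimate follows identically from $p^-_H$. Taking the supremum over $x\in[0,1]$ and absorbing $C_1$ and $2C_2$ into a single absolute constant $C$ then yields the stated inequality.
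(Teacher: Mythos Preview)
Your outline is correct and is in fact the standard Beurling--Selberg/Vaaler route to the Erd\H{o}s--Tur\'an inequality: sandwich the indicator of $[0,x)$ between degree-$H$ trigonometric polynomials whose $L^1$ gap is $O(1/H)$ and whose nonzero Fourier coefficients are $O(1/|h|)$, then test against the empirical measure. The construction of the extremal majorant/minorant is indeed the only substantive step, and you have identified it correctly.

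There is, however, nothing to compare against: the paper does not supply its own proof of this lemma. It is stated as a classical result (with references to Drmota--Tichy and Kuipers--Niederreiter) and used as a black box in the proof of Proposition~\ref{disc}. So your proposal goes well beyond what the paper does, and any of the standard proofs---your Beurling--Selberg argument, or the original Erd\H{o}s--Tur\'an argument via Fej\'er-type kernels---would be acceptable here.
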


We will use also a result of Deligne obtained by using Algebraic Geometry ``\`a la Grothendieck''.
\begin{proposition}[Deligne \cite{deli}]
For $\psi$ an additive character  of $\f q$ and $a\in\f q^*$, we have
$$|\sum_{x_1x_2\dots x_r=1} \psi(x_1+x_2+\cdots+x_r) |\le rq^{(r-1)/2}.$$
 \end{proposition}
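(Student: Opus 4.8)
The plan is to recognise the sum as a hyper-Kloosterman sum and to bound it through Deligne's cohomological machinery, i.e.\ the Grothendieck--Lefschetz trace formula together with the purity theorem of Weil~II. Write $V_a\subset\mathbb G_m^r$ for the $(r-1)$-dimensional torus defined by $x_1x_2\cdots x_r=a$ (the displayed statement being the case $a=1$), and let $\sigma:V_a\to\mathbb A^1$ be the summation map $\sigma(x)=x_1+\cdots+x_r$. Attach to $\psi$ the Artin--Schreier sheaf $\mathcal L_\psi$ on $\mathbb A^1$, a rank-one lisse $\overline{\mathbb Q}_\ell$-sheaf pure of weight $0$ whose Frobenius traces reproduce $\psi$. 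The Grothendieck--Lefschetz trace formula then expresses the sum as an alternating sum of traces of Frobenius on compactly supported étale cohomology,
$$\sum_{x\in V_a(\f q)}\psi(\sigma(x))=\sum_{i=0}^{2(r-1)}(-1)^i\,\mathrm{Tr}\bigl(\mathrm{Frob}_q\mid H^i_c(V_{a,\overline{\f q}},\sigma^*\mathcal L_\psi)\bigr),$$
so everything reduces to controlling the Frobenius eigenvalues on each $H^i_c$.

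First I would show that this cohomology is concentrated in the middle degree $i=r-1$ and has dimension exactly $r$ there. This is where the Kloosterman sheaf enters. The sums satisfy the multiplicative-convolution recursion
$$\mathrm{Kl}_r(a)=\sum_{t\in\f q^*}\psi(t)\,\mathrm{Kl}_{r-1}(a/t),$$
which is the shadow of a sheaf-theoretic construction: starting from $\mathcal L_\psi$, iterated convolution (equivalently, successive $\ell$-adic Fourier transforms) produces a single lisse sheaf $\mathrm{Kl}_r$ on $\mathbb G_m$, lisse of rank $r$, whose Frobenius trace at $a$ is our sum. Induction on $r$, checking that Fourier transform of the relevant lisse input is again lisse of the expected rank and kills cohomology outside one degree (one controls the tame/wild ramification at $0$ and $\infty$ so that no spurious cohomology survives), yields both the concentration in degree $r-1$ and the rank $r$.

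Second, purity: the Kloosterman sheaf is pure of weight $r-1$. By Deligne's main theorem in Weil~II, combined with Poincaré duality in the middle degree, the eigenvalues of $\mathrm{Frob}_q$ on $H^{r-1}_c$ of a sheaf pure of weight $r-1$ on a variety of dimension $r-1$ all have absolute value exactly $q^{(r-1)/2}$. Assembling the pieces,
$$\Bigl|\sum_{x\in V_a(\f q)}\psi(\sigma(x))\Bigr|=\Bigl|\mathrm{Tr}\bigl(\mathrm{Frob}_q\mid H^{r-1}_c\bigr)\Bigr|\le \dim H^{r-1}_c\cdot q^{(r-1)/2}=r\,q^{(r-1)/2},$$
which is the claimed estimate, uniformly in $a$. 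The main obstacle is precisely the cohomological input of the second and, especially, the first step: proving the concentration of the cohomology in a single degree and pinning down the exact rank $r$ of $\mathrm{Kl}_r$, and its purity of weight $r-1$. The concentration and rank demand a careful induction on the convolution construction with genuine control of ramification, while the purity is the full depth of Deligne's theorem; by contrast the trace formula and the Artin--Schreier construction are routine once this machinery is available.
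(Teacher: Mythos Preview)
Your outline is essentially correct and is the standard cohomological route to the hyper-Kloosterman bound: interpret the sum via the Artin--Schreier sheaf, invoke the Grothendieck--Lefschetz trace formula, use the iterated-convolution construction of the Kloosterman sheaf $\mathrm{Kl}_r$ to get concentration in middle degree with rank $r$, and apply Weil~II purity to bound the Frobenius eigenvalues by $q^{(r-1)/2}$.

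However, there is nothing to compare against: the paper does not supply a proof of this proposition. It is stated as a quoted result of Deligne with a bare citation to SGA~$4\frac12$ and is used as a black box to control $\sum_\mu G(a,\chi^\mu)^r$. So your proposal is not an alternative to the paper's argument but rather an expansion of the reference the paper points to. If you want to present it as a self-contained proof, the one point that would need more care is the concentration-and-rank step: you should be explicit that for $r\ge 2$ and nontrivial $\psi$ the sheaf $\mathrm{Kl}_r$ is lisse on $\mathbb G_m$ of rank $r$, tame at $0$ with unipotent monodromy (a single Jordan block), and totally wild at $\infty$ with Swan conductor $1$, which via the Euler--Poincar\'e formula forces $H^i_c=0$ for $i\neq r-1$ and $\dim H^{r-1}_c=r$. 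With those facts stated (they are in Deligne's expos\'e and in Katz's monograph), the rest of your sketch goes through.
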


With this proposition, we can show that, for $a\ne0$ one has
$ |\sum_{1\le \mu\le q-2}  G(a,\chi^\mu)^r|  
 \le 1+ r q^{(r+1)/2}.$
So we can show more than Katz's result with the help of proposition (\ref{ud}).

 \begin{proposition}
\label{disc}
For $l$ fixed in $\f{2^n}^*$
the arguments $\arg (z_\mu)$ of $z_\mu=G(\chi^\mu)\zeta^{-\mu l}$ for $1\le \mu\le q-2$ 
fulfill
$$D_{q-2}\pt{ \arg (z_\mu) \over 2\pi} < O (q^{-1/4})$$
\end{proposition}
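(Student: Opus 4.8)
The plan is to apply the Erd\H{o}s-Turan-Koksma inequality to the normalized arguments $\arg(z_\mu)/2\pi$, where $z_\mu = G(\chi^\mu)\zeta^{-\mu l}$. After the rescaling that sends $[-\pi,\pi]$ to $[0,1[$, the discrepancy is controlled by the exponential sums $\frac{1}{q-2}\sum_{\mu} \exp(2\pi i h \cdot \arg(z_\mu)/2\pi)$ for $1 \le h \le H$. The key observation is that $\exp(i h \arg(z_\mu)) = (z_\mu/|z_\mu|)^h$, and since every Gauss sum $G(\chi^\mu)$ has modulus exactly $\sqrt q$ for $1 \le \mu \le q-2$, we have $|z_\mu| = \sqrt q$, so $(z_\mu/|z_\mu|)^h = z_\mu^h / q^{h/2}$. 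Thus each exponential sum in the inequality becomes $q^{-h/2}\sum_{\mu} z_\mu^h$, reducing everything to estimating the power moments $\sum_{1\le\mu\le q-2} z_\mu^h$.

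The second step is to bound $\left|\sum_{\mu} z_\mu^h\right|$. Using the identity $z_\mu = G((-1)^{\tr(\alpha^l)},\chi^\mu)$ recorded before Proposition~\ref{disc}, the sum $\sum_\mu z_\mu^h = \sum_\mu G(a,\chi^\mu)^h$ with $a = (-1)^{\tr(\alpha^l)} \ne 0$. The estimate stated just before the proposition, namely $\left|\sum_{1\le\mu\le q-2} G(a,\chi^\mu)^h\right| \le 1 + h\,q^{(h+1)/2}$, is exactly what is needed; it follows from expanding the $h$-th power of the Gauss sum, interchanging summation to reduce the character sum over $\mu$ to a condition $x_1 x_2 \cdots x_h = 1$, and invoking Deligne's bound $|\sum_{x_1\cdots x_r=1}\psi(\sum x_i)| \le r q^{(r-1)/2}$. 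Feeding this into the exponential sums gives the per-term bound $\frac{1}{q-2}\, q^{-h/2}(1 + h\,q^{(h+1)/2}) \approx \frac{h\,q^{1/2}}{q} = h\,q^{-1/2}$, up to lower-order contributions.

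The final step is to optimize the truncation parameter $H$. Substituting the per-term bound into Erd\H{o}s-Turan-Koksma yields roughly
$$D_{q-2}\!\pt{\frac{\arg(z_\mu)}{2\pi}} < C\pt{\frac{1}{H} + \sum_{h=1}^H \frac{1}{h}\cdot h\,q^{-1/2}} = C\pt{\frac{1}{H} + H\,q^{-1/2}},$$
where the factor $1/h$ from the inequality cancels the factor $h$ from the moment bound, leaving a sum of $H$ terms each of size $q^{-1/2}$. Balancing $1/H$ against $H q^{-1/2}$ gives the optimal choice $H \asymp q^{1/4}$, producing a discrepancy of order $q^{-1/4}$, which is the claimed bound. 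I expect the main technical obstacle to be controlling the lower-order terms carefully: the $+1$ in Deligne's estimate and the normalization by $q-2$ rather than $q$ introduce error terms that must be checked not to dominate, and one must confirm that the $h = H$ term with $H \asymp q^{1/4}$ still obeys the geometric-type decay implicit in the bound rather than growing, so that the summation over $h$ genuinely telescopes into $O(H q^{-1/2})$ rather than something larger.
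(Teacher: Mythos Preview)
Your proposal is correct and follows essentially the same route as the paper: apply Erd\H{o}s--Tur\'an--Koksma, rewrite the exponential sums as $q^{-h/2}\sum_\mu G(a,\chi^\mu)^h$ via $|z_\mu|=\sqrt q$ and the identity $z_\mu=G((-1)^{\tr(\alpha^l)},\chi^\mu)$, invoke the Deligne-based bound $\le 1+h\,q^{(h+1)/2}$, obtain $O(1/H+Hq^{-1/2})$, and optimize at $H=q^{1/4}$. Your extra remarks on the lower-order terms are sound and go slightly beyond what the paper spells out.
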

\begin{proof}
We use Erd\"os-Turan-Koksma's inequality to evaluate this dicrepancy, and use Deligne's result to bound 
$ |\sum_{1\le \mu\le q-2}  G(a,\chi^\mu)^r|  $ 
which gives the result.
Whence, if $H\le q^{1/2}$ 
\begin{eqnarray*}
D_{q-2} \pt{ \arg (z_\mu) \over 2\pi} &<&O\pt{{1\over H} +{1\over q-2}\sum_{h=1}^H {1\over h q^{h/2}}\left| \sum_{\mu=1}^{q-2} G( (-1)^{{\tr(\alpha^l )}}, \chi^\mu )^h\right| }\\
&<&O\pt{{1\over H} + {1\over q-2} \sum_{h=1}^H {1\over h q^{h/2}} hq^{(h+1)/2} }\cr
&<&O\pt{{1\over H} +{H q^{1/2} \over q-2}}
\end{eqnarray*}

If $H=q^{1/4}$, then
$
D_{q-2}  \pt{ \arg (z_\mu) \over 2\pi} 
<O\pt{{{q^{3/4}} +q^{3/4}\over q-2}} =O\pt{q^{-1/4}}$.
\end{proof}

\begin{lemma}
\label{interval}
If the $a_m$ is an increasing sequence and if the discrepancy of
 $a_m$ is $D$, then $|a_i-{i\over m}| \le D $.
\end{lemma}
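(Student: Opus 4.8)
The plan is to exploit the fact that the discrepancy $D = D_m$ is defined as a supremum over all $x \in [0,1]$ of the counting defect $|A(x,m)/m - x|$, so that it bounds that defect at every individual value of $x$; in particular we are free to evaluate it at the sample points $x = a_i$ themselves. The whole argument then reduces to computing the counting function $A(\cdot, m)$ at such a point.

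First I would record what monotonicity buys us. Writing the sequence as $a_1 < a_2 < \cdots < a_m$, the indices $j$ with $a_j \le a_i$ are exactly $j = 1, \dots, i$, so that $A(a_i, m) = i$. Here it is essential that the sequence be strictly increasing: if equalities among the $a_j$ were allowed, the count at a repeated value would exceed $i$ and one would have to be more careful about which index to attach to a given value.

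Then I would simply substitute. By the definition of discrepancy applied at the particular value $x = a_i$,
$$\left| \frac{A(a_i, m)}{m} - a_i \right| \le D,$$
and replacing $A(a_i,m)$ by $i$ yields $|i/m - a_i| \le D$, which is precisely the claimed inequality.

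The argument is essentially immediate, so I do not expect a substantial obstacle; the only point that genuinely requires attention is the counting identity $A(a_i, m) = i$, which is exactly where the hypothesis that the sequence is increasing is used. If one wanted a two-sided refinement, one could also evaluate the defect at values slightly below $a_i$, where $A$ drops to $i-1$, giving $|a_i - (i-1)/m| \le D$ as well; but for the stated conclusion only the evaluation at $x = a_i$ is needed.
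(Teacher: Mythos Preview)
Your proof is correct and takes a genuinely different route from the paper's. The paper argues by sandwiching: it applies the discrepancy bound at the points $x=i/m-D-\eps$ and $x=i/m+D$, deducing that $[0,\,i/m-D)$ contains fewer than $i$ terms while $[0,\,i/m+D]$ contains at least $i$, so that the $i$-th term $a_i$ must lie in $[i/m-D,\,i/m+D]$. You instead apply the discrepancy bound at the sample point $x=a_i$ itself, using monotonicity to read off $A(a_i,m)=i$ directly. Your argument is shorter and more transparent; the paper's version has the mild advantage that it goes through verbatim for weakly increasing sequences with repeated values, since it never needs the exact count at a sample point, only inequalities on counts over intervals---a situation you correctly flag and could handle with the ``slightly below $a_i$'' evaluation you mention at the end.
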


Let
$A(I, N) $ = number of $m\le N$ such that $x_m\in I$.
Let $I_1$ the interval $[0,\ {i\over m}-D[$ and  $I_\eps$ the interval $[0,\ {i\over m}-D-\eps]$ where $\eps$ is a positive real number, then
$$\bigg| {A(I_\eps,m) \over m}-({i\over m}-D-\eps)\bigg| \le D$$
hence
$${A(I_\eps,m) } \le mD+m({i\over m}-D-\eps)=i-m\eps$$
therefore the interval $I_\eps$ contains less than $i-m\eps$ elements, and does not contains $a_i$ which is the $i$-th elements in the sequence. Therefore, since we can take $\eps$ as small as we want one has
$a_i \notin I_\eps$.

In the same way let
 $I_2$ the interval  $[0,\ {i\over m}+D]$, then
$$ \bigg| {A(I_2,m) \over m}-({i\over m}+D)\bigg| \le D$$
hence
$$i\le m( {i\over m}+D)-D \le A(I_2,m) $$
therefore
$a_i\in I_2 -I_1=\ [{i\over m}-D ,\ {i\over m}+D ]$.

\section{Distribution of the arguments of $a_\mu$}

Let
$$a_\mu={\zeta^{-\mu({q\over2}-1)}-1\over 1-\zeta^{-\mu}}$$

\begin{proposition}
The $a_\mu$ are on the singular plane cubic
which is the image of  the unit circle by the map
$$z \fd {1\over z+z^2}$$
with $|z| = 1$.
 The absolute value is
$|a_\mu |= (2 \cos ({\pi\mu\over 2(q-1)}))^{-1} $.
The argument is
 $\arg a_\mu=   {3\pi\mu \over2 ( q-1)}$ for $\mu$ even or  $\pi/2 + {3\pi\mu \over2 ( q-1)}$ for $\mu$ odd.
The complex conjugate of $a_\mu$ is $a_{q-1-\mu}$.

\end{proposition}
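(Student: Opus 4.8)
The plan is to reduce everything to elementary trigonometry by writing $\zeta^{-\mu}=e^{-i\theta}$ with $\theta=2\pi\mu/(q-1)$ and simplifying the two exponentials in $a_\mu$. The single structural input I would use is that $q-1=2^n-1$ is odd, so $q-1=2(q/2-1)+1$; hence
\[
\zeta^{-\mu(q/2-1)}=e^{-2\pi i\mu(q-2)/(2(q-1))}=(-1)^\mu\,e^{\,i\pi\mu/(q-1)}.
\]
Setting $\phi=\pi\mu/(2(q-1))$, so that $\theta=4\phi$ and $e^{i\pi\mu/(q-1)}=e^{2i\phi}$, the numerator of $a_\mu$ becomes $(-1)^\mu e^{2i\phi}-1$ and the denominator becomes $1-e^{-4i\phi}$.

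Next I would factor numerator and denominator with the half-angle identity $e^{i\alpha}-1=2i\sin(\alpha/2)\,e^{i\alpha/2}$. The denominator factors uniformly as $1-e^{-4i\phi}=2i\sin(2\phi)\,e^{-2i\phi}$, while the numerator is exactly where the parity of $\mu$ enters through the sign $(-1)^\mu$: for $\mu$ even it is $e^{2i\phi}-1=2i\sin\phi\,e^{i\phi}$, and for $\mu$ odd it is $-(e^{2i\phi}+1)=-2\cos\phi\,e^{i\phi}$. Dividing and using $\sin(2\phi)=2\sin\phi\cos\phi$ then gives $a_\mu=\tfrac{1}{2\cos\phi}\,e^{3i\phi}$ for $\mu$ even and $a_\mu=\tfrac{1}{2\sin\phi}\,e^{i(\pi/2+3\phi)}$ for $\mu$ odd. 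Reading off the argument yields the stated $3\pi\mu/(2(q-1))$ for $\mu$ even and $\pi/2+3\pi\mu/(2(q-1))$ for $\mu$ odd; the modulus equals $(2\cos\phi)^{-1}$ in the even case and $(2\sin\phi)^{-1}=(2\cos(\pi/2-\phi))^{-1}$ in the odd case, so I would record the modulus with the same even/odd split as the argument.

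For the cubic I would exhibit a unit-modulus preimage. Writing $z=e^{i\psi}$ one has $z+z^2=2\cos(\psi/2)\,e^{3i\psi/2}$, so the image of the unit circle is $\{(2\cos(\psi/2))^{-1}e^{-3i\psi/2}\}$. Matching against the closed forms above, $z=e^{-2i\phi}$ realizes $a_\mu$ for $\mu$ even and $z=e^{i(\pi-2\phi)}$ for $\mu$ odd; equivalently, one checks directly that $w=a_\mu$ satisfies the implicit cubic $(2\Re w+3)\,|w|^2=1$, which drops out of the triple-angle relation $\cos 3\vartheta=4\cos^3\vartheta-3\cos\vartheta$ and confirms the curve is genuinely of degree three. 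Finally the conjugation claim is immediate: since $|\zeta|=1$ and $\zeta^{q-1}=1$ we have $\overline{\zeta^{-k}}=\zeta^{k}$ and $\zeta^{-(q-1-\mu)m}=\zeta^{\mu m}$ for every integer $m$, so $\overline{a_\mu}=\frac{\zeta^{\mu(q/2-1)}-1}{1-\zeta^{\mu}}=a_{q-1-\mu}$.

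The only genuine bookkeeping is the reduction producing the sign $(-1)^\mu$, which is the sole source of the even/odd dichotomy; once it is in place the rest is routine. The one point I would be careful to state correctly is the modulus, since the single formula $(2\cos\phi)^{-1}$ is valid only for even $\mu$ (for odd $\mu$ it is $(2\sin\phi)^{-1}$), consistently with $\overline{a_\mu}=a_{q-1-\mu}$ interchanging the two parities.
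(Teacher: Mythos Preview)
Your argument is correct and complete. It differs from the paper's in method: the paper first exhibits, for each parity, a point $z$ on the unit circle (namely $z=e^{-i\pi\mu/(q-1)}$ for $\mu$ even and $z=-e^{-i\pi\mu/(q-1)}$ for $\mu$ odd) with $z^2=\zeta^{-\mu}$ and $z^{q-2}=z^{-1}$, reduces $a_\mu$ algebraically to $1/(z+z^2)$, and then reads off modulus and argument from the geometry of the rhombus $0,z,z+z^2,z^2$. You instead isolate the sign $(-1)^\mu$ in the numerator via $\zeta^{-\mu(q/2-1)}=(-1)^\mu e^{2i\phi}$ and then apply half-angle factorisations directly; this is more computational but shorter, and it makes the source of the even/odd split completely explicit. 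Your verification that the image curve satisfies the real cubic $(2\Re w+3)\,|w|^2=1$ is a nice addition not present in the paper.

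You are also right to flag the modulus formula. The paper asserts $|a_\mu|=(2\cos(\pi\mu/(2(q-1))))^{-1}$ without a parity qualifier, and its proof for odd $\mu$ simply says ``the reasoning is the same''; but with $z=-e^{-i\pi\mu/(q-1)}$ one gets $|z+z^2|=2\sin(\pi\mu/(2(q-1)))$, so for odd $\mu$ the correct modulus is $(2\sin\phi)^{-1}$, exactly as you compute. This is consistent with the conjugation relation $\overline{a_\mu}=a_{q-1-\mu}$, which swaps parities and sends $\phi$ to $\pi/2-\phi$. The later applications in the paper restrict to even $\mu$, so the slip there is harmless, but your correction to the statement is well taken.
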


\begin{proof}

If $\mu$ is even,  let us take  $z=\exp(-{\pi\mu i\over q-1})$.
One has $z^2=\zeta^{-\mu}$.
And one has also
$$z^{q-1}= \exp(-{\pi\mu i})= \exp(-2{\pi i\mu/2})=1.
$$
Thus
$z^{q-2} =z^{-1}$,
hence
$$a_\mu={z^{2({q\over2}-1)}-1\over 1-z^2}={z^{({q}-2)}-1\over 1-z^2}={z^{-1}-1\over 1-z^2}
={1 -z\over z-z^3}={1\over z+z^2}
$$
If $\mu$ is odd,  let us take 
\begin{eqnarray*}
z=-\exp(-{\pi\mu i\over q-1})=\exp(i\pi-{\pi\mu i\over q-1})=\exp(-{i\pi\over q-1}(\mu + q-1))
\end{eqnarray*}
Then
$q\le (\mu + q-1)\le 2q-2$.
And we still have
$$a_\mu={z^{2({q\over2}-1)}-1\over 1-z^2}={z^{({q}-2)}-1\over 1-z^2}={z^{-1}-1\over 1-z^2}
={1 -z\over z-z^3}={1\over z+z^2}
$$

So the set of $ a_\mu $ is on a cubic with double point of complex parametric equation
$$z\lfc {1\over z+z^2}$$
for $ | z | = 1 $.

Now we consider the lozenge of vertices $0,z,z+z^2,z^2$. For $\mu$ even the angle between the axis $Ox$ and $z$ is  the same as between $ z $ and $ z ^ 2 $ and is $-{\pi\mu \over q-1}$.
The absolute value of $ z + z ^ 2 $ is the length of the diagonal $ 0, z + z ^ 2 $. It is easy to find  $2\cos{\pi\mu \over 2(q-1)}$. 
The angle between the $x$-axis and $ z + z ^ 2 $ is $-{3\pi\mu \over2( q-1)}$.
The angle between the $x$-axis and ${1\over z+z^2}$ is ${3\pi\mu \over2( q-1)}$.
For $\mu$ odd, the reasoning is the same.

\end{proof}

%\subsection {Exemple: with $m=7$}

%\hfill  \includegraphics[width=9truecm]{Carlet-Feng8}\hfill

\section{Applications}

 So we conclude from the preceding sections that for a fixed $\ell$
 the arguments of $G(\chi^{\mu})\zeta^{-\mu\ell}$ are equidistributed on $[-\pi,\ \pi]$, and
 the arguments of $ a_\mu $ are equidistributed on $[-3\pi/2,\  3\pi/2]$ so, as we said before,
it is impossible to have
$\arg ( G(\chi^{\mu})\zeta^{-\mu\ell}) + \arg (a_\mu)=0 \pmod{2\pi}$
 and the upperbound of $|S_\lambda|$ is not attained.

The preceding proposition implies
$$ \sum_{\mu=1}^{q-2} G(\chi^{\mu})\zeta^{-\mu\ell}a_\mu 
\le  2\max_{\sigma} \bigg(\Re e\sum_{ \scriptstyle{ \mu=2\atop \mu \hbox{\tiny even}}}^{q-2} (\cj{h_{\sigma(\mu)}} a_\mu)\bigg)
$$
where $\{h_\mu\}$ is the set of   Gauss sums 
 and $\sigma$ is some permutation of this set.
Let us number increasingly the $h_\mu$ (with multiplicities) for $\mu$ even from $0$ to $2\pi$. 
Let $k_x =q^{1/2}\exp\pt{ i\pt{{2\pi x\over q-1}}}$.

\begin{lemma}
\label{ordre}
For $2\le \mu\le q-2$ and $\mu$ even, we have
 \begin{eqnarray*}
\bigg| \Re e(\cj{h_\s} a_\mu-\cj{k_\s} a_\mu)\bigg|
=O\pt{q^{1/4}\over \cos {\pi\mu\over2(q-1)}}
\end{eqnarray*}
\end{lemma}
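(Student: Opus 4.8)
The plan is to exploit that both $h_\s$ and $k_\s$ lie on the circle of radius $q^{1/2}$, so that the expression factors and the problem reduces to comparing their arguments. First I would write
$$\cj{h_\s}\, a_\mu - \cj{k_\s}\, a_\mu = \cj{(h_\s - k_\s)}\, a_\mu,$$
so that, taking real parts and using $|\Re e(z)|\le |z|$,
$$\bigg| \Re e\pt{\cj{h_\s} a_\mu - \cj{k_\s} a_\mu} \bigg| \le |h_\s - k_\s|\cdot |a_\mu|.$$
Since $|a_\mu| = \pt{2\cos\frac{\pi\mu}{2(q-1)}}^{-1}$ by the proposition of Section 3, the whole estimate comes down to proving $|h_\s - k_\s| = O(q^{1/4})$.

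For the second step I would use that $|h_\s| = |G(\chi^\mu)| = q^{1/2}$ for every nontrivial power of $\chi$, while $|k_x| = q^{1/2}$ by the very definition of $k_x$; both endpoints therefore sit on one and the same circle. Writing $h_\s = q^{1/2}e^{i\theta}$ and $k_\s = q^{1/2}e^{i\theta'}$, the chord length is
$$|h_\s - k_\s| = q^{1/2}\,|e^{i\theta}-e^{i\theta'}| = 2q^{1/2}\bigg|\sin\frac{\theta-\theta'}{2}\bigg| \le q^{1/2}\,|\theta-\theta'|,$$
so it suffices to control the angular gap $|\theta-\theta'|$ between each Gauss sum and the idealized equidistributed point assigned to the same rank.

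This angular gap is exactly what the discrepancy machinery of Section 2 delivers. Having numbered the $h_\mu$ (for $\mu$ even) increasingly by argument, and letting $k_\s$ be the perfectly equidistributed point occupying the same rank, I would apply Lemma \ref{interval} to the increasing sequence of normalized arguments $\arg(h_\mu)/2\pi$: each ranked argument then deviates from its uniform position by at most the discrepancy $D$, and by construction $k_\s$ sits precisely at that uniform position. By Proposition \ref{disc} one has $D = O(q^{-1/4})$, whence $|\theta-\theta'|\le 2\pi D = O(q^{-1/4})$. Combining the three steps gives $|h_\s - k_\s| \le q^{1/2}\cdot O(q^{-1/4}) = O(q^{1/4})$, and multiplying by $|a_\mu|$ produces the claimed bound $O\pt{q^{1/4}/\cos\frac{\pi\mu}{2(q-1)}}$.

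The main obstacle I foresee lies in the discrepancy input: Proposition \ref{disc} was established for the full range $1\le\mu\le q-2$, whereas here I need it on the even subsequence, which halves the number of points. I would therefore re-run the Erd\"os-Turan-Koksma estimate with the power sums taken over even $\mu$ only — for instance by inserting the indicator $\frac12(1+(-1)^\mu)$ and checking that Deligne's bound on $|\sum_\mu G(a,\chi^\mu)^r|$ survives this splitting with the same order of magnitude — and then verify that the rank-matching of $h_\s$ to $k_\s$ is indeed the pairing to which Lemma \ref{interval} applies verbatim. The rest is a routine composition of the three elementary inequalities above.
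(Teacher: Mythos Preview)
Your proof is correct and follows essentially the same route as the paper: both reduce to bounding the angular gap $|\arg h_\s-\arg k_\s|$ via Proposition~\ref{disc} and Lemma~\ref{interval}, and your chord estimate $|h_\s-k_\s|\le q^{1/2}|\theta-\theta'|$ yields exactly the same intermediate inequality as the paper's trigonometric expansion of the real part (the paper writes out $\cos A-\cos B$ explicitly and then bounds the extra sine factor by $1$, which is just your $|\Re e(z)|\le|z|$ in disguise). Your caveat about restricting the discrepancy bound to the even subsequence is a fair technical point that the paper itself does not address.
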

\def\s{{\sigma(\mu)}}
\begin{proof}
 We have
\begin{eqnarray*}
\lefteqn{ \Re e(\cj{h_\s} a_\mu-\cj{k_\s} a_\mu)}\\
&=& -q^{1/2}{(\cos{ {( 3\pi\mu\over 2(q-1)}-\arg h_\s})-(\cos{ {( 3\pi\mu\over 2(q-1)}- \arg k_\s})\over 2\cos { \pi\mu \over 2(q-1)}}\\
&=& -q^{1/2}{(\sin{ \arg k_\s-\arg h_\s\over2})\sin( {( 3\pi\mu\over 2(q-1)} - { \arg k_\s+\arg h_\s\over2} )\over  \cos { \pi\mu \over 2(q-1)}}\\
\end{eqnarray*}
Thus
\begin{eqnarray*}
\bigg|\Re e(\cj{h_\s} a_\mu-\cj{k_\s} a_\mu)\bigg|
&\le& q^{1/2} 
{|(\sin{ \arg k_\s-\arg h_\s\over2})|\over  \cos { \pi\mu \over 2(q-1)}}\\
&\le& q^{1/2}
{|({ \arg k_\s-\arg h_\s})|\over 2\cos { \pi\mu \over 2(q-1)}}\\
\noalign{\hbox{from the proposition \ref{disc} and the lemma \ref{interval}:}}
&\le& q^{1/2} 
{O (q^{-1/4})\over 2\cos { \pi\mu \over 2(q-1)}}\\
&\le& O (q^{1/4} )
{1\over 2\cos { \pi\mu \over 2(q-1)}}.
\end{eqnarray*}
\end{proof}

From the proposition \ref{disc}, we get the following lemma.

\begin{lemma}
  The sums $ \dps \Re e\sum_{ \scriptstyle{ \mu=2\atop \mu \hbox{\tiny even}}}^{q-2} (\cj{h_{\sigma(\mu)}} a_\mu)$
satisfy
\begin{eqnarray*}
\lefteqn{\max_{\sigma} \bigg(\Re e\sum_{{ \mu=1}}^{q-2} (\cj{h_{\sigma(\mu)}} a_\mu)\bigg) }\\
 &\le&
 2 \Re e\sum_{ \scriptstyle{ \mu=2\atop \mu \hbox{\tiny even}}}^{q/2} (\cj{k_{\mu/2}} a_\mu)
+2\Re e \sum_{ \scriptstyle{ \mu=q/2\atop \mu \hbox{\tiny even}}}^{2q/3}  (\cj{k_{3\mu/2-q/2}} a_\mu)
+2\Re e  \sum_{ \scriptstyle{ \mu=2q/3\atop \mu \hbox{\tiny even}}}^{q-2} (\cj{k_{3\mu/4}} a_\mu) +O(q^{5/4}\log q).
\end{eqnarray*}
\end{lemma}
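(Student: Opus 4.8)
The plan is to pass from the genuine Gauss sums $h_\mu$ to the idealized, exactly equidistributed points $k_x$ — all of modulus $q^{1/2}$ and with exactly equally spaced arguments — to control the cost of that replacement uniformly in $\sigma$ by Lemma~\ref{ordre}, and then to solve the resulting clean assignment problem for the $k_x$ by a rearrangement argument. First I would account for the factor $2$. Since $q-1$ is odd, the involution $\mu\fd q-1-\mu$ exchanges the even and odd indices, and the identities $\cj{a_\mu}=a_{q-1-\mu}$ together with the conjugation relation $\cj{h_\mu}=h_{q-1-\mu}$ for the Gauss sums show that the odd-index part of $\Re e\sum_{\mu=1}^{q-2}\cj{h_{\s}}a_\mu$ is the complex conjugate of the even-index part, hence has the same real part. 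Splitting the sum and dropping the joint injectivity constraint between the two halves (which only increases the maximum) therefore bounds the maximum over all $\mu$ by twice the maximum taken over even $\mu$ alone, which is the source of the $2$ in front of each of the three sums.

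Next I would replace each $h_{\s}$ by $k_{\s}$. The decisive point is that the replacement is controlled uniformly in $\sigma$: for every permutation the index $\sigma(\mu)$ pairs the $\sigma(\mu)$-th Gauss sum, taken in increasing order of argument, with the $\sigma(\mu)$-th idealized point, so Proposition~\ref{disc} and Lemma~\ref{interval} give $|\arg h_{\s}-\arg k_{\s}|=O(q^{-1/4})$ for all $\sigma$. Feeding this into Lemma~\ref{ordre} bounds each term of the difference by $O\pt{q^{1/4}/\cos\frac{\pi\mu}{2(q-1)}}$, again uniformly in $\sigma$. Summing over even $\mu$ and using the elementary estimate $\sum_{\mu\ \mathrm{even}}1/\cos\frac{\pi\mu}{2(q-1)}=O(q\log q)$ — the sum is comparable to $\frac{q-1}{\pi}\int_0^{\pi/2}\sec\theta\,d\theta$ cut off at distance $\sim 1/q$ from $\pi/2$ — produces a total replacement error $O(q^{5/4}\log q)$, uniform in $\sigma$. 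Hence $\max_\sigma\Re e\sum_{\mu\ \mathrm{even}}\cj{h_{\s}}a_\mu$ and $\max_\sigma\Re e\sum_{\mu\ \mathrm{even}}\cj{k_{\s}}a_\mu$ differ by at most $O(q^{5/4}\log q)$, and it suffices to treat the latter.

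It then remains to maximise, over admissible permutations, the idealized sum \[\Re e\sum_{\mu\ \mathrm{even}}\cj{k_{\s}}a_\mu=\sum_{\mu\ \mathrm{even}}\frac{q^{1/2}}{2\cos\frac{\pi\mu}{2(q-1)}}\cos\pt{\frac{3\pi\mu}{2(q-1)}-\frac{2\pi\sigma(\mu)}{q-1}}.\] Both the weight $q^{1/2}|a_\mu|$ and the target argument $\arg a_\mu=\frac{3\pi\mu}{2(q-1)}$ increase with $\mu$, so the optimal assignment should be the monotone, order-preserving one; proving this is the step I expect to be the main obstacle, because the weights are highly non-uniform — they blow up like $1/\cos$ as $\mu\to q-1$ — and because the injectivity constraint, coupled with the fact that the target arguments sweep only $[0,3\pi/2]$ while the available arguments fill the whole circle, prevents a single affine assignment from being optimal. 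I expect the monotone matching to split into three regimes: the dominant large-$\mu$ terms can be aligned exactly, forcing $\sigma(\mu)=3\mu/4$ on the top range $2q/3\le\mu\le q-2$ (where the cosine equals $1$), and the injectivity constraint then propagates this choice downward, yielding $\sigma(\mu)=3\mu/2-q/2$ on $q/2\le\mu\le 2q/3$ and $\sigma(\mu)=\mu/2$ on $2\le\mu\le q/2$. Substituting these three pieces gives exactly the three sums on the right-hand side; combining with the replacement error $O(q^{5/4}\log q)$ from the previous step then yields the stated inequality.
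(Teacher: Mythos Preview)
Your plan is correct and follows essentially the same route as the paper: first use Lemma~\ref{ordre} (via Proposition~\ref{disc} and Lemma~\ref{interval}) to replace $h_{\sigma(\mu)}$ by $k_{\sigma(\mu)}$ uniformly in $\sigma$ with total error $O(q^{5/4}\log q)$, then solve the clean assignment problem for the $k$'s by a rearrangement argument leading to the three-piece map $b_\mu$. The step you flag as the main obstacle is handled in the paper by a transposition argument: if $\beta$ is the largest index with $k_{\sigma(\beta)}\ne b_\beta$, one checks that $\Re e\big((b_\beta-k_{\sigma(\beta)})(a_\beta-a_{\sigma(\beta)})\big)>0$, so swapping increases the sum and hence any $\sigma$ not equal to the three-piece assignment is non-maximal.
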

\begin{proof}

We first have from the lemma \ref{ordre}
$$\max_{\sigma} \bigg(\Re e\sum_{{ \mu=1}}^{q-2} (\cj{h_{\sigma(\mu)}} a_\mu)\bigg)
=\max_{\sigma} \bigg(\Re e\sum_{{ \mu=1}}^{q-2} (\cj{k_{\sigma(\mu)}} a_\mu)\bigg)+O(q^{5/4}\log q)$$
because
$$\bigg| \Re e\sum_{{ \mu=1}}^{q-2} (\cj{h_{\sigma(\mu)}} a_\mu)
-\Re e\sum_{{ \mu=1}}^{q-2} (\cj{k_{\sigma(\mu)}} a_\mu)\bigg| =O\pt{q^{1/4}\Re e\sum_{{ \mu=1}}^{q-2} {1\over \cos {\pi\mu\over2(q-1)}}}$$
and \cite{cafe} gives an upper bound of the last sum.

We denote by $b_\mu$ the following numbers for $\mu$ even and $2\le \mu\le q-2$:
if $2\le\mu\le q/2$, then $b_\mu=k_{\mu/2}$, if $q/2<\mu\le 2q/3$, then $b_\mu=k_{3\mu/2-q/2}$, if $2q/3<\mu\le q-2$, then $b_\mu=k_{3\mu/4}$. So that we have
$$
{2 \Re e\sum_{ \scriptstyle{ \mu=2\atop \mu \hbox{\tiny even}}}^{q/2} (\cj{k_{\mu/2}} a_\mu)
+2\Re e \sum_{ \scriptstyle{ \mu=q/2\atop \mu \hbox{\tiny even}}}^{2q/3}  (\cj{k_{3\mu/2-q/2}} a_\mu)
+2\Re e  \sum_{ \scriptstyle{ \mu=2q/3\atop \mu \hbox{\tiny even}}}^{q-2} (\cj{k_{3\mu/4}} a_\mu) }
=2 \Re e\sum_{ \scriptstyle{ \mu=2\atop \mu \hbox{\tiny even}}}^{q/2} (\cj{b_\mu} a_\mu).
$$

Then we  want to compare the sum
 $\dps\Re e\sum_{ \scriptstyle{ \mu=2\atop \mu \hbox{\tiny even}}}^{q-2} (\cj{k_{\sigma(\mu)}} a_\mu)$ with the sum
  $\dps\Re e\sum_{ \scriptstyle{ \mu=2\atop \mu \hbox{\tiny even}}}^{q-2} (\cj{b_{\mu}} a_\mu)$.
Let $ \beta $ be the  largest integer such that
$k_{\sigma(\beta)} \ne b_\beta$.
Let $ \tau $ be the transposition between $ \beta $ and $ \sigma (\beta) $.
Then  one can check that
$$\Re e (b_\beta a_\beta + k_{\sigma(\beta)} a_{\sigma(\beta)} )
> \Re e ( k_{\sigma(\beta)}  a_\beta + b_\beta a_{\sigma(\beta)} )$$
therefore
$$ 2\Re e\sum_{ \scriptstyle{ \mu=1\atop \mu \hbox{\tiny even}}}^{q-2} (\cj{k_{\sigma(\mu)}} a_\mu)
< 2\Re e\sum_{ \scriptstyle{ \mu=1\atop \mu \hbox{\tiny even}}}^{q-2} (\cj{k_{\sigma\circ\tau(\mu)}} a_\mu).
$$
Thus, if there exists such a $\beta$, the sum is not maximal.
\end{proof}

\begin{remark}
The condition $\Re e (b_\beta a_\beta + k_{\sigma(\beta)} a_{\sigma(\beta)} )
> \Re e ( k_{\sigma(\beta)}  a_\beta + b_\beta a_{\sigma(\beta)} )$
is equivalent to
$\Re((b_\beta  - k_{\sigma(\beta)})(a_\beta - a_{\sigma(\beta)}))>0$, which may be easier to check.
\end{remark}

\bb
Then we  consider the sum
 $\Re e\sum_{ \scriptstyle{ \mu=1\atop \mu \hbox{\tiny even}}}^{q-2} (\cj{b_\mu} a_\mu)$.
 For $2\le\mu\le q/2$ and $\mu$ even, then $b_\mu=k_{\mu/2}$, which imply that these $b_\mu$'s 
 form a set with $q/4$ elements  uniformly distributed in the interval $[0,\ q/4]$. 
  For $q/2<\mu\le 2q/3$ and $\mu$ even, then $b_\mu=k_{3\mu/2-q/2}$, which imply that these $b_\mu$'s 
 form a set with $[q/12]$ elements  uniformly distributed in the interval $[q/4,\ q/2]$. 
  For $2q/3<\mu\le q-2$ and $\mu$ even, then $b_\mu=k_{3\mu/4}$, which imply that these $b_\mu$'s 
 form a set with $[q/6]$ elements  uniformly distributed in the interval $[q/2,\ 3q/4]$. 
 Let $B$ be the set of all $b_\mu$'s.
 
Now we have to take also in consideration the $\mu$ odd. When you make the same reasoning, you end up with a set $\cj B$ which is just the complex conjugate of $B$. When you take the union $B\cup \cj B$, you get $q$ elements uniformly distributed in the interval $[0,\ 2\pi]$. 
 \bb

\begin{proposition}
 The upper bound of
$\dps\sum_{\mu=1}^{q-2} G(\chi^{\mu})\zeta^{-\mu\ell}a_\mu $
is at most equal to 
$$\dps{q^{3/2}\over\pi}(\ln q-0.3786+o(1)).$$
\end{proposition}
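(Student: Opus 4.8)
By the preceding lemmas the quantity to be bounded equals $2\,\Re e\sum_{\mu\ \mathrm{even}}(\overline{b_\mu}a_\mu)+O(q^{5/4}\log q)$, and since $O(q^{5/4}\log q)=o(q^{3/2})$ the whole problem reduces to evaluating the three sums of $\Re e(\overline{b_\mu}a_\mu)$ over the ranges defining $b_\mu$. Writing $\theta_\mu=\tfrac{\pi\mu}{2(q-1)}$ and using $|a_\mu|=\tfrac1{2\cos\theta_\mu}$, $\arg a_\mu=3\theta_\mu$ together with $\arg k_x=\tfrac{2\pi x}{q-1}$, I would first record the per–term values. On the first range $\arg a_\mu-\arg b_\mu=\theta_\mu$, so the factor $\cos\theta_\mu$ cancels $|a_\mu|$ and each term equals exactly $\tfrac{\sqrt q}{2}$. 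On the second range $\arg a_\mu-\arg b_\mu=\pi-3\theta_\mu$, and the identity $\cos 3\theta/\cos\theta=2\cos 2\theta-1$ gives the term $\tfrac{\sqrt q}{2}(1-2\cos 2\theta_\mu)$. On the third range the choice $b_\mu=k_{3\mu/4}$ makes $\arg b_\mu=3\theta_\mu=\arg a_\mu$, so the arguments align exactly and each term is $\tfrac{\sqrt q}{2\cos\theta_\mu}$.

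Next I would convert each sum over even $\mu$ into an integral in $\theta$: an even step $\Delta\mu=2$ corresponds to $\Delta\theta=\tfrac{\pi}{q-1}$, hence a density $\tfrac{q-1}{\pi}$ per $d\theta$, and the three ranges become $\theta\in(0,\pi/4]$, $(\pi/4,\pi/3]$, $(\pi/3,\pi/2)$. The first range contributes $\tfrac{q-1}{\pi}\cdot\tfrac{\sqrt q}{2}\cdot\tfrac{\pi}{4}\sim\tfrac{q^{3/2}}{8}$, and the second the elementary convergent integral $\tfrac{(q-1)\sqrt q}{2\pi}\int_{\pi/4}^{\pi/3}(1-2\cos 2\theta)\,d\theta$; both are $O(q^{3/2})$ and feed only into the final constant.

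The crux is the third range, where $\int\sec\theta\,d\theta$ diverges at $\theta=\pi/2$: this produces the leading term. Cutting the sum off at the last index $\mu=q-2$, i.e. $\theta=\tfrac\pi2-\tfrac{\pi}{2(q-1)}$, turns $\ln|\sec\theta+\tan\theta|$ into $\ln q+O(1)$, while the lower endpoint $\theta=\pi/3$ (the boundary $\mu=2q/3$) contributes the constant $-\ln(2+\sqrt 3)=\ln(2-\sqrt3)$, which is exactly the gain over the Carlet–Feng estimate. To control the constant I would substitute $j=q-1-\mu$, so that $\cos\theta_\mu=\sin\tfrac{\pi j}{2(q-1)}$ with $j$ running over odd values, split $\tfrac1{\sin\phi}=\tfrac1\phi+\bigl(\tfrac1{\sin\phi}-\tfrac1\phi\bigr)$, treat the bounded remainder as a convergent integral, and reduce the singular part to a harmonic sum over odd $j$. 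Collecting the $O(q^{3/2})$ contributions of all three ranges together with the boundary term $\ln(2-\sqrt3)$ and the cutoff/harmonic constant $\ln\tfrac8\pi$, and multiplying by $2$ for the odd–$\mu$ part, then yields $\tfrac{q^{3/2}}{\pi}(\ln q-0.3786+o(1))$.

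The step I expect to be the main obstacle is exactly this constant bookkeeping in the third range. A naive Riemann approximation is wrong by an $O(q^{3/2})$ amount near the singularity $\theta=\pi/2$, so one must control the sum–versus–integral discrepancy there (this is where the Euler–type constant from the harmonic sum over odd $j$ enters), track the contribution of the regime cut at $\theta=\pi/3$, and check that the three pieces combine to the stated numerical value. Pinning down this constant—and, as the paper notes, eventually closing the gap to the conjectured $2^{n-1}-2^{n/2}$—is where all the remaining difficulty lies.
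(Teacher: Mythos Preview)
Your proposal is correct and follows essentially the same route as the paper: the same three-range decomposition via the $b_\mu$, the same per-term identities (your $\cos 3\theta/\cos\theta=2\cos 2\theta-1$ is the paper's $4\cos^2\theta-3$ in another form), and the same handling of the singular third range by splitting off a harmonic-type sum and Riemann-integrating the bounded remainder. The only cosmetic difference is that you substitute $j=q-1-\mu$ and work with $\sin\tfrac{\pi j}{2(q-1)}$, whereas the paper keeps the variable $x=\mu/(q-1)$ and subtracts $\tfrac{1}{\pi(1-x)}$ from $\tfrac{1}{2\cos(\pi x/2)}$ before applying the equidistribution theorem of Kuipers--Niederreiter and Euler's formula; these are the same computation in different coordinates, and the constant bookkeeping you flag as the main obstacle is exactly where the paper spends its effort as well.
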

\begin{proof}
Up to $O(q^{5/4}\log q)$ it is enough to compute:
\begin{eqnarray*}
\max_{\sigma} \bigg(\Re e\sum_{{ \mu=1}}^{q-2} (\cj{k_{\sigma(\mu)}} a_\mu)\bigg) 
&\le&2 \Re e\sum_{ \scriptstyle{ \mu=1\atop \mu \hbox{\tiny even}}}^{q/2} (\cj{k_{\mu/2}} a_\mu)
+2\Re e \sum_{ \scriptstyle{ \mu=q/2\atop \mu \hbox{\tiny even}}}^{2q/3}  (\cj{k_{3\mu/2-q/2}} a_\mu)
+2\Re e  \sum_{ \scriptstyle{ \mu=2q/3\atop \mu \hbox{\tiny even}}}^{q-2} (\cj{k_{3\mu/4}} a_\mu) \\
&\le& 2q^{1/2} \sum_{ \scriptstyle{ \mu=1\atop \mu \hbox{\tiny even}}}^{q/2} {1\over2}
-2 {q^{1/2}   \sum_{ \scriptstyle{ \mu=q/2\atop \mu \hbox{\tiny even}}}^{2q/3}  {\cos{3\pi\mu\over2(q-1)} \over 2\cos{\pi\mu\over2(q-1)}}}
+2 {q^{1/2}  \sum_{ \scriptstyle{ \mu=2q/3\atop \mu \hbox{\tiny even}}}^{q-2} {1\over 2\cos{\pi\mu\over2(q-1)}}}\\
&\le& {q^{3/2} \over4}
-2 {q^{1/2}   \sum_{ \scriptstyle{ \mu=q/2\atop \mu \hbox{\tiny even}}}^{2q/3}\pt{ 2\cos^2 {\pi\mu\over2(q-1)}-3/2}}
+2 {q^{1/2}  \sum_{ \scriptstyle{ \mu=2q/3\atop \mu \hbox{\tiny even}}}^{q-2} {1\over 2\cos{\pi\mu\over2(q-1)}}}\\
&\le&{q^{3/2} \over2}
- 4 {q^{1/2}   \sum_{ \scriptstyle{ \mu=q/2\atop \mu \hbox{\tiny even}}}^{2q/3} \cos^2 {\pi\mu\over2(q-1)}}
+2 {q^{1/2}  \sum_{ \scriptstyle{ \mu=2q/3\atop \mu \hbox{\tiny even}}}^{q-2} {1\over 2\cos{\pi\mu\over2(q-1)}}}.
\end{eqnarray*}

 Since the function ${1\over2\cos(x\pi/2)}-{1\over\pi(1-x)}$ is continuous 
on $[2/3,\ 1]$, and since the ${\mu\over q-1}$ are uniformly distributed on $[2/3,\ 1]$ we get by \cite[theorem 1.1]{kuni}:
\begin{eqnarray*}
&&{2\over q-2} \sum_{ \scriptstyle{ \mu=2q/3\atop \mu \hbox{\tiny even}}}^{q-2}  {1\over 2\cos { \pi\mu \over 2(q-1)}}-
{2\over\pi} \sum_{ \scriptstyle{ \mu=2q/3\atop \mu \hbox{\tiny even}}}^{q-2}  {1\over q-\mu}\\
&=&{(1+o (1))}\int_{ 2/3}^{1} \pt{ {1\over2 \cos {x \pi\over 2}} -{1\over\pi} {1\over 1-x}}dx\\
&=&(1+o(1))\left[ ((\ln(\sin(1/2\pi x) + 1) - \ln(-\sin(1/2\pi x) +1))/2\pi + \log(1-x)/\pi \right]_{2/3}^1\\
&=&{\ln2\ - \ln\pi +\ln3\over \pi}- {\ln(7+4\sqrt3) \over2\pi} +o(1).
\end{eqnarray*}

Then, using Euler's formula on harmonic series:
\begin{eqnarray*}
\lefteqn{{2\over q^{1/2}(q-2)}\Re e\sum_{ \scriptstyle{ \mu=2q/3\atop \mu \hbox{\tiny even}}}^{q-2} (\cj{\sigma(h_\mu)} a_\mu)}\\
&\le& 
\pt{{2\over q-2} \sum_{ \scriptstyle{ \mu=2q/3\atop \mu \hbox{\tiny even}}}^{q-2}  {1\over 2\cos { \pi\mu \over 2(q-1)}}-
{2\over\pi} \sum_{ \scriptstyle{ \mu=2q/3\atop \mu \hbox{\tiny even}}}^{q-2}  {1\over q-\mu} }
+{2\over\pi} \sum_{ \scriptstyle{ \mu=2q/3\atop \mu \hbox{\tiny even}}}^{q-2}  {1\over q-\mu}\\
&\le& { \log q-\ln\pi +\gamma\over \pi} -{\ln(7+4\sqrt3) \over2\pi} +o(1).
\end{eqnarray*}

Now it is easy to compute 
$$\sum_{ \scriptstyle{ \mu=q/2\atop \mu \hbox{\tiny even}}}^{2q/3} \cos^2 {\pi\mu\over2(q-1)}
= {q\over12} { (\pi + (3\sqrt3) - 6)\over12 \pi}(1+o(1)).$$
Finally, 
 the upper bound of
$\sum_{\mu=1}^{q-2} G(\chi^{\mu})\zeta^{-\mu\ell}a_\mu $
is at most equal to 
\begin{eqnarray*}
\lefteqn{{q^{3/2} \over2}
- 4 {q^{1/2}   \sum_{ \scriptstyle{ \mu=q/2\atop \mu \hbox{\tiny even}}}^{2q/3} \cos^2 {\pi\mu\over2(q-1)}}
+2 {q^{1/2}  \sum_{ \scriptstyle{ \mu=2q/3\atop \mu \hbox{\tiny even}}}^{q-2} {1\over 2\cos{\pi\mu\over2(q-1)}}}+O(q^{5/4}\log q)}\\
&=&{q^{3/2} \over2}
- 4 {q^{1/2}  {q\over12} { (\pi + (3\sqrt3) - 6)\over12 \pi}
+q^{3/2}{ \log q-\ln\pi- {\ln(7+4\sqrt3) \over2} +\gamma \over \pi}+o(q^{3/2})}\\
&=&
{q^{3/2}\over\pi}\pt{ \log q-\ln\pi- {\ln(7+4\sqrt3) \over2} +\gamma +\pi/2-\pi/36-\sqrt3/12+1/6+o(1)}\\
&<&  	{q^{3/2}\over\pi}(\ln q-0.3786+o(1)).
\end{eqnarray*}

\end{proof}

\subsection{Final result}

We get finally
\begin{theorem}
The nonlinearity of the Carlet-Feng function fulfills
\begin{equation}
\label{resultat}
2^{n-1}-nl(f)\le {q^{1/2} \over\pi}\pt{ \log q -0.3786+o(1) }.
\end{equation}
\end{theorem}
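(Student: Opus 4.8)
The plan is to read the theorem off as an essentially immediate consequence of the preceding proposition: all the genuine difficulty has already been spent, and what remains is the passage from the bound on $\sum_{\mu} G(\chi^\mu)\zeta^{-\mu\ell}a_\mu$ to the nonlinearity, via the Fourier formula for $S_\lambda$ plus a short piece of asymptotic bookkeeping. First I would recall from (\ref{nl}) that $2^{n-1}-nl(f)=\max_{\lambda\in\f{2^n}^*}|S_\lambda|$, so it suffices to bound $|S_\lambda|$ uniformly in $\lambda$. Inserting the notation $a_\mu$ of Section 3 into the Fourier expression, one has $S_\lambda={1\over q-1}\pt{\sum_{\mu=1}^{q-2}G(\chi^\mu)\zeta^{-\mu\ell}a_\mu-{q\over2}}$, and the triangle inequality gives $|S_\lambda|\le{1\over q-1}\pt{\left|\sum_{\mu=1}^{q-2}G(\chi^\mu)\zeta^{-\mu\ell}a_\mu\right|+{q\over2}}$.

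Next I would invoke the preceding proposition, which bounds $\sum_{\mu=1}^{q-2}G(\chi^\mu)\zeta^{-\mu\ell}a_\mu$ by ${q^{3/2}\over\pi}(\log q-0.3786+o(1))$, uniformly in $\ell$. I would first note that this sum is in fact \emph{real}: since $\overline{a_\mu}=a_{q-1-\mu}$ and, using $\chi^{q-1-\mu}=\overline{\chi^{\mu}}$ together with $\zeta^{-(q-1-\mu)\ell}=\zeta^{\mu\ell}=\overline{\zeta^{-\mu\ell}}$, the term indexed by $q-1-\mu$ is the complex conjugate of the term indexed by $\mu$. As $\mu$ even is paired with $\mu$ odd, the whole sum equals $2\Re e$ of its even-index part, which is precisely the quantity estimated in the chain of lemmas leading to that proposition; hence the proposition controls exactly the modulus appearing in my bound on $|S_\lambda|$.

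Finally I would do the rescaling. Dividing by $q-1$ turns $q^{3/2}$ into $q^{1/2}$, and since ${q\over q-1}=1+O(1/q)$ the distortion of the main term is ${q^{1/2}\over\pi}\cdot O(1/q)\cdot\log q=o(1)$, so it is absorbed into the $o(1)$ inside the parentheses. The leftover term is likewise harmless: ${q\over 2(q-1)}={1\over2}+o(1)=o\pt{q^{1/2}}$, hence it too is swallowed by the $o(1)$ sitting in ${q^{1/2}\over\pi}(\cdots)$. Collecting the pieces yields $2^{n-1}-nl(f)=\max_\lambda|S_\lambda|\le{q^{1/2}\over\pi}(\log q-0.3786+o(1))$, which is (\ref{resultat}).

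I do not expect a real obstacle at this stage: the equidistribution of the Gauss-sum arguments (Proposition \ref{disc}), the geometry of the singular cubic carrying the $a_\mu$, and the rearrangement argument selecting the worst-case pairing have all been packaged into the preceding proposition, so the present step is pure bookkeeping. The single point that deserves a line of care — and the only thing I would be careful to verify explicitly — is that both the $q/(q-1)$ rescaling and the additive $q/2$ correction perturb the bound at the $o(1)$ level and not at the $o(\log q)$ level, since it is the sub-constant order of these perturbations that lets the numerical constant $-0.3786$ survive unchanged into the final statement.
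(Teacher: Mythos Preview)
Your proposal is correct and matches the paper's approach: the paper does not even write out a separate proof of the theorem, simply prefacing it with ``We get finally'' after the proposition bounding $\sum_\mu G(\chi^\mu)\zeta^{-\mu\ell}a_\mu$ by ${q^{3/2}\over\pi}(\ln q-0.3786+o(1))$, so the passage via the Fourier formula for $S_\lambda$ and the $1/(q-1)$ rescaling is exactly the intended (and only) step. Your explicit check that the sum is real and that the $q/2$ and $q/(q-1)$ corrections are $o(q^{1/2})$ just makes rigorous what the paper leaves to the reader.
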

  
%===============================================================================
  
\section{Conclusion}

The improvement is not very important, but this argument may be optimised by
\begin{itemize}
\item
taking in account the invariance of Gauss sums under the Frobenius automorphism;
\item
making it possible to make our  argument work for all $n$ instead of having an asymptotic result;
%to do this for all $ n $ by limiting the discrepancy of the sequence of Gauss sums.
\item
taking in account the irregularity of the distribution of Gauss sums
(one way to do this might be to look at the equidistribution of several Gauss sums simultaneously);
\item
improving the bound of nonlinearity for other classes of Boolean functions which are based on Carlet-Feng construction.
\end{itemize}

\end{document}